\theoremstyle{plain}
\newtheorem*{pmt}{Spacetime Positive Energy Theorem}
\newtheorem{theorem}{Theorem}[section]
\newtheorem{proposition}[theorem]{Proposition}
\numberwithin{equation}{section}
\let\oldmarginpar\marginpar
\renewcommand\marginpar[1]{\-\oldmarginpar[\raggedleft\footnotesize #1]%
{\raggedright\footnotesize #1}}
\newcommand \bei {\begin{itemize}}
\newcommand \eei {\end{itemize}}
\newcommand \be         {\begin{equation}}
\newcommand \Vcal {\mathcal V}
\newcommand \Lcal {\mathcal L}
\newcommand \Div {\text{div}}
\newcommand \madm {m_\text{ADM}}
\newcommand \eps \epsilon
\newcommand \coeff \kappa
\newcommand \Ical {\mathcal I}
\newcommand \Mcal {\mathcal M}
\newcommand \tr {\text{tr}}
\newcommand \p  {\prime}
\newcommand \del \partial
\newcommand \RR         {\mathbb R}
\newcommand \LL         {\mathbb L}
\newcommand \ee         {\end{equation}}
\newcommand \la \langle
\newcommand \ra \rangle
\definecolor{myred}{rgb}{0.858, 0.0, 0.0}
\newcommand\blfootnote[1]{%
	\begingroup
	\renewcommand\thefootnote{}\footnote{#1}%
	\addtocounter{footnote}{-1}%
	\endgroup
}
\tikzstyle{variation} = [rectangle, minimum width=3cm, minimum height=1cm, text centered, text width=2.7cm, draw=black,  fill=black!10]
\tikzstyle{process} = [rectangle, minimum width=3cm, minimum height=1cm, text centered, text width=2.7cm, draw=black]
\tikzstyle{arrow} = [->, thick]
\begin{document}
\title{A note on the electrostatic Born--Infeld equation with radial charge density}
\author{The-Cang Nguyen\footnote{E-mail: {\sl alpthecang@gmail.com.}
}}
\date{\today}
\maketitle
\begin{abstract}
In this note, we present a new proof of the solvability of the electrostatic Born–Infeld equation with radial charge, based on the conformal method and the Spacetime Positive Energy Theorem. An advantage of this approach is that the resulting solutions are automatically classical and spacelike.
\blfootnote{\textit{2020 Mathematics Subject Classification.} 35J93 (Primary), 35Q60, 53C21, 83C05 (Secondary).}
\blfootnote{\textit{Key words and phrases.} Born-Infeld equation, mean curvature equation, conformal method, Spacetime Positive Energy Theorem.}
\end{abstract}
\section{Introduction}
The Born–Infeld Lagrangian is a foundation of field theory, introduced to deal with the failure of classical electrostatic Maxwell theory to satisfy the Principle of Finite Energy. In the vacuum case, it leads to the relation
\begin{subequations} \label{BI}
\begin{align}
&- \Div \bigg( { \nabla u \over \sqrt{1 - |\nabla u|^2}} \bigg) = \rho \qquad \text{in $\RR^n$}, \label{MCequation}
\\
&\lim_{|x| \to +\infty} u = 0, \label{uto0}
\end{align}
\end{subequations}
where $\rho$ is the charge density and $u$ represents the electric potential. This equation is called the \textit{electrostatic Born–Infeld equation} and has received significant attention in recent years. The most common approach to solving this problem is the variational method, in which the equation is interpreted as the Euler–Lagrange equation of the functional
\begin{equation}\label{def I}
I(\phi) := \int_{\RR^n} \bigg(1 - \sqrt{1 - |\nabla \phi|^2}  \bigg) dx - \langle \rho, \phi \rangle,
\end{equation}
and one looks for a weak solution $u$ to \eqref{BI} as the minimizer of $I$. Until now, there have been two main existence results obtained for \eqref{BI} under assumptions of $\rho$: the first one given in \cite{BonheureAveniaPomponio BIEquationRadialCharge} requires that $\rho$ is radially distributed, while the second one \cite{BonheureIacopetti BIEquationSmallCharge} relies on a smallness condition on $\rho$.
\\
\\
In this work, we are interested in the case of radial charge densities and provide a new proof of existence in view of general relativity. The key observation is that, in classical relativity, the equation \eqref{MCequation} is exactly the mean curvature equation in Lorentz–Minkowski spacetime $\LL^{n+1}$. Indeed, any spacelike hypersurface of  $\LL^{n+1}$ can be written as a graph $\Sigma = \{(x,u(x)) ~|~ x \in \RR^n\}$,  and if $\rho$ denotes the mean curvature of its second fundamental form, then the pair $(\rho, u)$ automatically satisfies \eqref{MCequation}. In addition, since the induced metric $h$ of $\Sigma$ in this case is expressed by
$$
h_{ij} = \delta_{ij} - \del_i u \del_j u, 
$$
if $(\Sigma, h)$ is sufficiently asymptotically flat so that $h - \delta_{\text{Euc}} = O(r^{-p})$ at infinity for some $p > 2$, it follows that the graph function $u$ converges to a constant at infinity, which we can normalize to zero, and so $u$ also holds \eqref{uto0}.
\\
\\
The geometric insight proposes a different approach that instead of solving \eqref{BI} directly, we will aim to construct an asymptotically flat (AF) spacelike hypersurface of $\LL^{n+1}$ with prescribed mean curvature $\rho$.
\\
{\small \begin{figure}[h]
	\centering
\begin{tikzpicture}[node distance=3cm]
	
	\node (input) [variation] {Functional $I$};
	\node (output) [variation, below of=input] {Born-Infeld equation};
	\node (process1a) [process, right=of output] {Spacelike hypersurface $(\Sigma, h, k)$ of Minkowski};
	\node (process2a) [process, right=of process1a] {Constraint solution $(M,g,k)$ with $\tr_g k = \rho$};
	\node (start) [process, above of=process2a] {Free data $\rho$};

	\draw [arrow] (input) --  node[anchor=west]{minimizer of $I$}(output);
	\draw [arrow] (process1a) -- node[anchor=south]{mean curvature}node[anchor=north]{$\rho$} (output);
	\draw [arrow] (process2a) -- node[anchor=south]{zero mass}node[anchor=north]{Spacetime PET} (process1a);
	\draw [arrow] (start) --  node[anchor=east]{conformal method}(process2a);
\end{tikzpicture}
\caption{Variation method and an Approach based on general relativity}
\label{figure two methods}
\end{figure}}
\\
To deal with this, we rely on two tools from general relativity: the conformal method and the Spacetime Positive Energy Theorem (PET). More precisely, as illustrated in Figure~\ref{figure two methods}, which presents a schematic of the two methods discussed, it is well-known by the rigidity part of the Spacetime PET that if an AF manifold $(M,g)$, together with a tensor $k$, forms a vacuum initial data set for the Cauchy problem in general relativity and has zero ADM mass, then $(M,g)$ is isometric to a spacelike hypersurface of $\LL^{n+1}$ with second fundamental form $k$. Therefore, to construct the desired hypersurface, we only need to find a vacuum AF initial data set $(M,g,k)$ such that $\tr_g k = \rho$ and its mass is zero. For this task, it is natural to think of using the conformal method, a traditional to construct initial data sets from scratch with prescribed mean curvature. When $\rho$ is radial, as we will see below, this approach yields a feasible and quite simple construction. More precisely, working within the weighted Hölder spaces introduced in Subsection \ref{subsection weighted Holder space}, we obtain the following result.
\begin{theorem} \label{main theorem} Let $\rho$ be an arbitrary radial function in $C^{1, \alpha}_{ - q} (\RR^n)$ with $\alpha \in (0,1)$ and
\begin{equation} \label{q decay at infinity} 
q >  \left\{\begin{array}{lr}
	\max \big\{2, {n \over 2} \big\} & \text{\quad if $3 \le n < 8$,}
	\\
	n - 2 & \text{\quad if $n \ge 8$~~~~~.}
\end{array}\right.
\end{equation}
Then there exists an asymptotically flat spacelike hypersurface $(\Sigma, h)$ of the Lorentz-Minkowski spacetime $\LL^{n+1}$ such that $h - \delta_{\text{Euc}} \in C^{3,\alpha}_{-2q + 2}$ and its mean curvature equals to $\rho$. In particular, the electrostatic Born-Infeld equation \eqref{BI} admits at least one spacelike solution $u \in C^{2,\alpha}_{-q + 2}$.  
\end{theorem}
In comparison with the result in \cite{BonheureAveniaPomponio BIEquationRadialCharge}, our result is slightly weaker, since we require Hölder regularity for $\rho$ in $\RR^n$. Nevertheless, the approach has an advantage that the resulting solutions in Theorem \ref{main theorem} are spacelike and classical, whereas those derived via the variational method are only weak solutions. Moreover, our method avoids the difficulty in the variational approach, namely, showing that a minimizer of $I$ solves \eqref{BI}.
\\
\\
It is also worth noting that, as mentioned in Section \ref{section proof of main theorem}, Theorem \ref{main theorem} can be further strengthened by assuming only that $q>2$. However, we prefer the stated version since it seems to allow a generalization to a larger class of $\rho$, rather than being restricted to the radial case.
\\
\\
The outline of the article is as follows. In Section \ref{section preliminaries}, we give a brief summary of the conformal method and the Spacetime PET. In Section \ref{section proof of main theorem}, the proof of Theorem \ref{main theorem} based on these tools is established.
\paragraph*{Acknowledgments.} The author would like to thank the anonymous referee of a previous version of this article for recommending Birkhoff’s theorem and for providing many helpful comments. Part of the results in this article were obtained during the author’s postdoctoral fellowship funded by the French National Research Agency (ANR) at the Institut Montpelliérain Alexander Grothendieck during 2020–2021.
\section{Preliminaries} \label{section preliminaries}
In this section, we review some standard facts on the conformal method, weighted Hölder spaces and the Spacetime PET. While these notions are well defined on general manifolds, for our purposes we restrict our attention to the Euclidean space $(\RR^n , \delta_{\text{Euc}})$. For a treatment of the general case, we refer the reader to \cite{Bartnik, Diltsthesis}.
\subsection{Conformal method} \label{subsection conformal method}
An AF manifold $(M , g )$ of $n$ dimensions, with $n \ge 3$, coupled with a symmetric $(0,2)-$tensor $k$, is called a vacuum initial data set for the Cauchy problem in general relativity if $(M, g, k)$ satisfies the system
\begin{subequations} \label{constraint}
	\begin{align}
		R_g - | k |_g^2 +(\tr_g k )^2 &= 0  &&\text{{\footnotesize [Hamiltonian constraint}]}
		\label{hamilton}
		\\
		\Div_g \big( k - (\tr_g k ) g \big) & = 0, &&\text{[{\footnotesize Momentum constraint}]}
		\label{momentum}
	\end{align}
\end{subequations}
where $R_g$ is the scalar curvature of $g$. These equations are called the \textit{vacuum Einstein constraint equations} and the study of \eqref{constraint} is a topical issue. Using the conformal method, to construct solutions to \eqref{constraint} grounded in Euclidean space, let $\rho$  be a scalar function and let $\sigma$ be a trace-free and divergence--free symmetric $(0,2)$-tensor on $(\RR^n , \delta_{\text{Euc}})$, one is required to find a positive function $\varphi$ tending to $1$ at infinity and a $1-$form $W$ satisfying
\begin{subequations}\label{CE}
\begin{align}
- {4 (n - 1) \over n - 2} \Delta \varphi + {n - 1 \over n} \rho^2 \varphi^{N-1} 
&= |\sigma + LW|^2 \varphi^{- N - 1}  &&[\text{{\footnotesize Lichnerowicz equation}}]
	\label{Liceq}
	\\
	\Div(L W) &= {n-1 \over n} \varphi^N d\rho,  &&[\text{{\footnotesize vector equations}}]
	\label{veceq}
\end{align}
\end{subequations}
where $N = {2n \over n - 2}$ and $L$ is the conformal Killing operator defined by
\begin{equation} \label{Killing}
(LW)_{ij} = \nabla_i W_j + \nabla_j W_i - {2 \delta_{ij} \over n} (\Div W).
\end{equation}
These equations are called the \textit{vacuum Einstein conformal constraint equations}, or simply the \textit{conformal equations}. Once such a solution $(\varphi ,W)$ exists, it follows that
\begin{equation} \label{parametre}
(g,k) :=  \bigg(\varphi^{N - 2} \delta_{\text{Euc}}, \, {\rho \over n} \varphi^{N - 2} \delta_{\text{Euc}} + \varphi^{-2} (\sigma + LW ) \bigg)
\end{equation}
is a solution to the vacuum constraint equations \eqref{constraint}. In this case, we remark that $\tr_g k = \rho$, therefore, $\rho$ is called the mean curvature in free data of this method. 
\subsection{Elliptic operators on weighted H\"{o}lder spaces} \label{subsection weighted Holder space} 
We next consider the elliptic operators used to study the conformal equations. For the proofs, we refer the reader to \cite{Diltsthesis}. 
\\
\\
Given an integer $l \ge 0$, a H\"{o}lder exponent $\alpha \in [ 0 , 1]$, and a decay exponent $ \beta> 0$, we will use the weighted H\"{o}lder spaces $C^{l , \alpha}_{-\beta}$ to capture asymptotic of functions and tensors near infinity. For $\alpha = 0$, we will write $C^{l}_{- \beta}$ instead of $C^{l , 0}_{- \beta}$. The weighted norm convention we are using is that the $C^{ s , \alpha}_{-\beta}$ norm is
given by
$$
\| f \|_{l , \alpha , -\beta} := \sum_{|s| \le l}\sup_{\RR^n} \big( \zeta^{|s| + \beta} |\del^s f|\big) + \sum_{|s| = l} \sup_{\RR^n} \Bigg(\zeta^{l + \beta + \alpha} \sup_{0 < |y - x| \le \zeta} \Big( {| \del^s f(y) - \del^s f(x)| \over |y - x|^\alpha}\Big) \Bigg)
$$
where in this context $\zeta$ is a positive function which equals $|x|$ outside the unit ball and $s$ is a multi-index. It will be clear from the context if the notation refers to a space of functions on $\RR^n$, or a space of sections of some bundle over $\RR^n$.
\begin{proposition}[Compact embedding for weighted H\"{o}lder spaces] \label{prop. properties of WH} If $l_1 + \alpha_1 > l_2 + \alpha_2$ and $\beta_1 > \beta_2 $ then the inclusion $C^{l_1 , \alpha_1}_{-\beta_1} \subset C^{l_2 , \alpha_2}_{-\beta_2}$ is compact.
\end{proposition}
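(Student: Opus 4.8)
\emph{Proof proposal.} The plan is to combine a diagonal extraction on an exhaustion by balls with a tail estimate that exploits the strict gap $\beta_1 > \beta_2$ in the decay rates. Let $(f_m)$ be a sequence bounded in $C^{l_1,\alpha_1}_{-\beta_1}$, say $\| f_m \|_{l_1,\alpha_1,-\beta_1} \le \Lambda$ for all $m$; we must extract a subsequence converging in $C^{l_2,\alpha_2}_{-\beta_2}$. First, work on compact sets: for each fixed $R>0$ the weight $\rho$ is bounded above and below on the ball $B_R$, so the restrictions $f_m|_{B_R}$ are bounded in the ordinary H\"{o}lder space $C^{l_1,\alpha_1}(B_R)$. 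Since $l_1+\alpha_1 > l_2 + \alpha_2$, the classical Arzel\`a--Ascoli-type compact embedding $C^{l_1,\alpha_1}(B_R) \hookrightarrow C^{l_2,\alpha_2}(B_R)$ yields a subsequence converging in $C^{l_2,\alpha_2}(B_R)$. Applying this along $B_1 \subset B_2 \subset \cdots$ and passing to a diagonal subsequence (still denoted $(f_m)$), we obtain a limit $f$ with $f_m \to f$ in $C^{l_2,\alpha_2}(B_R)$ for every $R$.

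Second, control the tail, which is where $\beta_1>\beta_2$ is used. For $|x|\ge R$ and any multi-index $s$ with $|s|\le l_2 \le l_1$ one has
$$
\rho^{|s|+\beta_2}\,|\del^s f_m(x)| \;=\; \rho^{\beta_2-\beta_1}\big(\rho^{|s|+\beta_1}\,|\del^s f_m(x)|\big) \;\le\; \Lambda\, R^{\beta_2-\beta_1},
$$
and a parallel computation, splitting the difference quotient $|\del^s f_m(y)-\del^s f_m(x)|$ according to whether $|y-x|$ is comparable to $\rho(x)$ or much smaller and using $\rho(y) \asymp \rho(x)$ when $|y-x|\le\rho(x)$, bounds the weighted $\alpha_2$-H\"{o}lder seminorm of the top-order derivatives over $\{|x|\ge R\}$ by $C\Lambda R^{\beta_2-\beta_1}$. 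Since $\beta_2-\beta_1<0$, these tail contributions tend to $0$ as $R\to\infty$, uniformly in $m$; passing to the limit in the pointwise bounds shows the same for $f$, so in particular $f\in C^{l_2,\alpha_2}_{-\beta_2}$.

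Finally, combine the two ingredients. Given $\eps>0$, choose $R$ so large that the portion of the $C^{l_2,\alpha_2}_{-\beta_2}$ norm coming from $\{|x|\ge R\}$ is $<\eps/3$ for every $f_m$ and for $f$; then pick $m$ large enough that $\| f_m-f \|_{C^{l_2,\alpha_2}(B_{R+1})} < \eps/3$, which also controls the difference-quotient terms whose base point $x$ lies in $B_R$ (their partner $y$ satisfying $|y-x|\le\rho(x)$ remains in $B_{R+1}$ after, if necessary, enlarging the radius by a fixed amount). Adding the inner and outer pieces gives $\| f_m-f \|_{l_2,\alpha_2,-\beta_2}<\eps$ for all large $m$, which is the desired convergence.

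The step I would be most careful about is the bookkeeping for the H\"{o}lder seminorm near infinity and on the overlap annulus $\{R\le|x|\le R+1\}$: one must verify that difference quotients based at points in the tail are genuinely dominated using only the decay of the derivatives together with the available $\alpha_1$-H\"{o}lder control, that the comparison $\rho(y)\asymp\rho(x)$ for $|y-x|\le\rho(x)$ is applied consistently, and that no term is double counted when the norm is split into an inner and an outer part. Everything else is the standard Arzel\`a--Ascoli mechanism, and the hypotheses $l_1+\alpha_1>l_2+\alpha_2$ (for compactness on balls) and $\beta_1>\beta_2$ (for the decaying tail) are each used exactly once.
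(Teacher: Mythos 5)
The paper does not actually prove Proposition~\ref{prop. properties of WH}; it refers to Dilts's thesis and \cite{DiltsIsenbergMazzeoMeier}, so there is no internal argument to compare against. Your strategy — interior compactness on an exhaustion $B_1\subset B_2\subset\cdots$ via the classical H\"older embedding (using $l_1+\alpha_1>l_2+\alpha_2$), a diagonal subsequence, and a uniform tail estimate produced by the gap $\beta_1>\beta_2$ — is the standard route and is essentially correct.

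The one place I would push back is precisely the step you flag: the assertion that $\rho(y)\asymp\rho(x)$ whenever $|y-x|\le\rho(x)$ is false. With $\rho(x)=|x|$ for $|x|\ge 1$, taking $|x|=R$ and $y$ near the origin gives $|y-x|\le\rho(x)$ but $\rho(y)\asymp 1$ while $\rho(x)=R$; in that regime the separate bounds $|\del^s f(y)|\le\Lambda\rho(y)^{-|s|-\beta_1}$ and $|\del^s f(x)|\le\Lambda\rho(x)^{-|s|-\beta_1}$ only give $|\del^s f(y)|\lesssim\Lambda$, so the factor $\rho(x)^{l_2+\beta_2}$ in front of the difference quotient is not controlled at all, let alone by $R^{\beta_2-\beta_1}$. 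The comparison $\rho(y)\asymp\rho(x)$ (and hence your whole tail argument, including the mean-value-theorem step for the case $l_2<l_1$) is valid once the H\"older seminorm in the weighted norm is taken over $|y-x|\le\theta\rho(x)$ for some fixed $\theta<1$ — this is the convention used in Bartnik's and Dilts's definitions, and it is surely what the paper intends even though its displayed formula literally allows $|y-x|\le\rho(x)$. With that restriction made explicit, no case split is actually necessary: the MVT estimate (for $l_2<l_1$) or the $\alpha_1$-H\"older estimate (for $l_2=l_1$) applies uniformly on $|y-x|\le\theta\rho(x)$ and yields the tail bound $C\Lambda R^{\beta_2-\beta_1}$ on all pieces of the $C^{l_2,\alpha_2}_{-\beta_2}$ norm over $\{|x|\ge R\}$, and the rest of your proof goes through unchanged.
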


\begin{proposition} [Weighted elliptic regularity for Laplacian] \label{prop. Laplacian} Let $V \ge 0$ be a function in $C^{l - 2 , \alpha}_{- 2 - \eps}$ with $l \ge 2, \, \alpha \in (0,1)$ and $  \eps> 0$. 
\begin{itemize}
	\item[(a)] $ \Delta - V : C^{l , \alpha}_{-\beta} \to C^{l - 2 , \alpha}_{- \beta- 2}$ is an isomorphism if and only if $ 0 < \beta< n - 2$. 
	\item[(b)] If $ u \in C^{0}_{\beta}$ and $\Delta u - V u \in C^{l,\alpha}_{- \beta - 2}$, then
	$$
	\| u \|_{l , \alpha , - \beta} \le c ( \| u \|_{C^{0}_{-\beta}} + \| \Delta u - V u \|_{l - 2 , \alpha , - \beta - 2} )
	$$
	for some constant $c > 0$ independent of $u$. 
\end{itemize} 
\end{proposition}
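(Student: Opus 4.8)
The plan is to treat part~(a) as a compact perturbation of the model Laplacian, and part~(b) as a weighted interior Schauder estimate obtained by rescaling on dyadic annuli; both arguments follow the classical pattern (Nirenberg--Walker, Cantor, McOwen, Lockhart--McOwen, Bartnik \cite{Bartnik}), and I would only sketch the structure, referring to \cite{Diltsthesis, DiltsIsenbergMazzeoMeier} for the analytic details. I would begin with the model case $V\equiv 0$. Here $\Delta:C^{l,\alpha}_{-\beta}\to C^{l-2,\alpha}_{-\beta-2}$ is injective for every $\beta>0$, since a bounded harmonic function on $\RR^n$ decaying at infinity vanishes; it is surjective for $0<\beta<n-2$ because the Newtonian potential $|x|^{2-n}*f$ of a source $f\in C^{l-2,\alpha}_{-\beta-2}$ lies in $C^{l,\alpha}_{-\beta}$ in exactly this range (interior Schauder regularity gives the $(l,\alpha)$ part, and the decay $|x|^{-\beta}$ is read off from splitting the convolution into the regions $|y|\le|x|/2$, $|y|\ge 2|x|$, and $|y|\sim|x|$). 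At the endpoints invertibility fails: for $\beta\ge n-2$ the potential of a generic $f$ decays only like $|x|^{2-n}$ and no faster, so $\Delta$ is not onto $C^{l-2,\alpha}_{-\beta-2}$; for $\beta\le 0$ the target weight is too strong for a bounded inverse to exist. This is the indicial-root analysis of the flat Laplacian, which I would quote.

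For general $V\ge 0$ in $C^{l-2,\alpha}_{-2-\epsilon}$, multiplication by $V$ maps $C^{l,\alpha}_{-\beta}$ into $C^{l-2,\alpha}_{-\beta-2-\epsilon}$, which embeds compactly into $C^{l-2,\alpha}_{-\beta-2}$ by Proposition~\ref{prop. properties of WH}; hence $\Delta-V$ is a compact perturbation of $\Delta$ and therefore Fredholm of the same index, namely $0$, for all $\beta\in(0,n-2)$, while outside this interval it inherits the failure of the model operator (the Fredholm/index range is a compact-perturbation invariant). It remains to rule out a kernel when $0<\beta<n-2$. If $(\Delta-V)u=0$ with $u\in C^{l,\alpha}_{-\beta}$, then $u$ is bounded and $u(x)\to 0$ as $|x|\to\infty$. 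On the open set $\Omega=\{u>0\}$ one has $\Delta u=Vu\ge 0$, so $u$ is subharmonic on $\Omega$; for any $\delta>0$ the set $\{u>\delta\}$ is bounded, $u$ is subharmonic on it and equals $\delta$ on its boundary, so the maximum principle forces $\{u>\delta\}=\emptyset$. Letting $\delta\downarrow 0$ gives $u\le 0$, and the same argument applied to $-u$ gives $u\ge 0$, so $u\equiv 0$. Thus $\Delta-V$ is injective and Fredholm of index $0$, hence an isomorphism for $0<\beta<n-2$, with bounded inverse by the open mapping theorem; this proves~(a).

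Part~(b) is a purely local-to-global a priori estimate (for $\beta\in(0,n-2)$ it is already contained in~(a)), and I would obtain it by scaling. Cover $\RR^n$ by the ball $B_2$ and the dyadic annuli $A_j=\{2^{j-1}<|x|<2^{j+1}\}$, $j\ge 1$. On $B_2$, the interior Schauder estimate for $\Delta$ bounds $\|u\|_{C^{l,\alpha}(B_1)}$ by $\|u\|_{C^0(B_2)}+\|\Delta u\|_{C^{l-2,\alpha}(B_2)}$; writing $\Delta u=(\Delta u-Vu)+Vu$ and bounding $\|Vu\|_{C^{l-2,\alpha}}\lesssim\|V\|_{C^{l-2,\alpha}}\,\|u\|_{C^{l-2,\alpha}}$, the last factor is absorbed into the left-hand side by interpolation between $C^0$ and $C^{l,\alpha}$ (over nested balls $B_1\subset B_{3/2}\subset B_2$). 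On $A_j$ set $x=2^jy$ and $\tilde u(y)=u(2^jy)$; then on the fixed annulus $\{1/2<|y|<2\}$ one has $\Delta_y\tilde u-2^{2j}V(2^j y)\tilde u=2^{2j}(\Delta u-Vu)(2^j y)$, and since $\|2^{2j}V(2^j\cdot)\|_{C^{l-2,\alpha}}=O(2^{-j\epsilon})$ is uniformly bounded, the interior Schauder estimate for $\Delta_y-2^{2j}V(2^j\cdot)$ holds with a constant independent of $j$. Undoing the scaling, the powers of $2^j$ produced by the weight $\rho^{|s|+\beta}$ on $A_j$ exactly cancel those coming from the rescaling, so one obtains a bound for the $C^{l,\alpha}_{-\beta}$ norm restricted to $A_j$ by $\|u\|_{C^0_{-\beta}}+\|\Delta u-Vu\|_{l-2,\alpha,-\beta-2}$ with constant independent of $j$; taking the supremum over $j$ and combining with the estimate on $B_2$ yields~(b).

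The main obstacle, conceptually, is the endpoint statement in~(a): showing that $\Delta-V$ genuinely ceases to be an isomorphism at $\beta=0$ and $\beta=n-2$. Injectivity in the interior is elementary from the maximum principle and Fredholmness of index zero is automatic once one has the compact embedding, so the only genuinely technical input is the precise Fredholm/indicial structure of the model Laplacian on $\RR^n$, which I would take from \cite{Diltsthesis, DiltsIsenbergMazzeoMeier} and Bartnik \cite{Bartnik}. In~(b) the only care needed is the weight bookkeeping in the rescaling and the observation that, because $V$ decays strictly faster than $|x|^{-2}$, the zeroth-order term is a small perturbation at large radii and is absorbed for free.
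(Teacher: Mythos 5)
The paper does not prove this proposition; it states it and refers to \cite{Diltsthesis, DiltsIsenbergMazzeoMeier} for the argument. So there is no ``paper proof'' to compare against, but your sketch is the standard one that appears in those references (and in Bartnik's \cite{Bartnik} Sobolev analogue), and it is substantially correct. A few small points deserve care.

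In part (a), the attribution of compactness is slightly off. You claim that $C^{l-2,\alpha}_{-\beta-2-\epsilon}\subset C^{l-2,\alpha}_{-\beta-2}$ is compact by Proposition~\ref{prop. properties of WH}, but that proposition requires a strict gain in \emph{both} the regularity index $l+\alpha$ and the weight, and here the regularity index is unchanged. The correct way to extract compactness is to use the two-derivative loss in the multiplication step: factor the map $u\mapsto Vu$ as $C^{l,\alpha}_{-\beta}\hookrightarrow C^{l-2,\alpha}_{-\beta+\delta}$ (compact for $0<\delta<\epsilon$, since both $l+\alpha>l-2+\alpha$ and $\beta>\beta-\delta$), then multiply by $V\in C^{l-2,\alpha}_{-2-\epsilon}$ to land in $C^{l-2,\alpha}_{-\beta+\delta-2-\epsilon}$, which embeds \emph{continuously} into $C^{l-2,\alpha}_{-\beta-2}$ because $\delta<\epsilon$. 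The conclusion you need is the same, but as written the justification invokes the compact embedding proposition in a case it does not cover.

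Your endpoint discussion for $\beta\le 0$ is vague (``the target weight is too strong for a bounded inverse to exist''); what actually fails there is injectivity, since constants lie in $C^{l,\alpha}_0$ and harmonic polynomials of degree $\le -\beta$ lie in $C^{l,\alpha}_{-\beta}$ when $\beta<0$. For $\beta\ge n-2$, your surjectivity argument is fine but one should also note that $\beta=0$ and $\beta=n-2$ are the indicial roots of the radial Laplacian, at which the operator is not even Fredholm. These are cosmetic: the maximum-principle kernel argument for $\Delta-V$ and the dyadic rescaling in part (b), including the observation that $\|2^{2j}V(2^{j}\cdot)\|_{C^{l-2,\alpha}}=O(2^{-j\epsilon})$ so the rescaled Schauder constants are uniform in $j$, are exactly right.
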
 
Similarly, we also have the following proposition for the operator $\Div L$ appearing in the vector equations \eqref{veceq}, where $L$ is the conformal Killing  operator defined in \eqref{Killing}.
\begin{proposition}[Weighted elliptic regularity for vector Laplacian] \label{prop. vector Laplacian}
$ \Div L : C^{l , \alpha}_{-\beta} \to C^{l - 2 , \alpha}_{- \beta- 2} $ is an isomorphism if and only if $0 < \beta< n -2 $.
\end{proposition}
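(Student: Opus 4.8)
The plan is to treat $\Div L$ as a constant-coefficient, second-order, \emph{elliptic} and \emph{formally self-adjoint} operator on $1$-forms, and to reduce the statement to the scalar case already recorded in Proposition~\ref{prop. Laplacian} together with the standard Fredholm theory for geometric elliptic operators on an asymptotically Euclidean end (as developed in \cite{Diltsthesis, DiltsIsenbergMazzeoMeier}). First I would record the elementary identity
\[
(\Div LW)_i = \Delta W_i + \tfrac{n-2}{n}\,\del_i(\Div W),
\]
whose (leading) symbol $\xi \mapsto |\xi|^2\,\mathrm{Id} + \tfrac{n-2}{n}\,\xi\otimes\xi$ has eigenvalues $|\xi|^2$ (on $\xi^\perp$) and $\tfrac{2(n-1)}{n}|\xi|^2$ (along $\xi$), hence is positive definite. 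So $\Div L$ is elliptic and, having constant coefficients, scales exactly as $\Delta$ under dilations. Consequently the general weighted theory applies: $\Div L : C^{l,\alpha}_{-\beta}\to C^{l-2,\alpha}_{-\beta-2}$ is Fredholm for all $\beta$ outside a discrete exceptional set, and, because the operator is formally self-adjoint, its cokernel at the weight $-\beta$ is canonically identified with its kernel at the $L^2$-dual weight $-(n-2-\beta)$. It therefore suffices to compute kernels.

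For injectivity I would use a divergence trick. Applying $\Div$ once more to the identity above yields $\Div(\Div LW) = \tfrac{2(n-1)}{n}\,\Delta(\Div W)$, so if $\Div LW = 0$ with $W\in C^{l,\alpha}_{-\beta}$ and $\beta>0$, then $\Div W$ is a harmonic function tending to $0$ at infinity, hence $\Div W\equiv 0$ by Liouville's theorem. Feeding this back, each component $W_i$ is harmonic and decaying, so $W\equiv 0$. Thus $\ker(\Div L)=\{0\}$ on $C^{l,\alpha}_{-\beta}$ for \emph{every} $\beta>0$. In particular, when $0<\beta<n-2$ the dual weight $n-2-\beta$ is again positive, so the cokernel vanishes as well, and $\Div L$ is an isomorphism. (One may also note, as a sanity check, $\langle \Div LW,W\rangle_{L^2}=-\tfrac12\|LW\|_{L^2}^2$ when $W$ decays, so nontrivial kernel elements would have to be conformal Killing fields.)

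For the converse I would exhibit the obstructions explicitly: the conformal Killing fields of $(\RR^n,\delta_{\text{Euc}})$ — translations, rotations, the dilation, and the special conformal fields — all lie in $\ker L\subset\ker(\Div L)$ and have growth $0$, $1$, or $2$. If $\beta\le 0$, the constant (translation) fields already belong to $C^{l,\alpha}_{-\beta}$, so $\Div L$ is not injective. If $\beta\ge n-2$, the dual weight $-(n-2-\beta)$ is nonpositive, so those same fields lie in the dual-weight kernel and produce a nontrivial cokernel, so $\Div L$ is not surjective. Hence the isomorphism holds precisely when $0<\beta<n-2$.

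The step I expect to be the real load-bearing point is the black-box input: the Fredholm alternative on weighted H\"older spaces together with the self-adjoint duality that identifies the cokernel with a kernel at the conjugate weight. Everything genuinely specific to $\Div L$ — ellipticity of the symbol, the divergence trick, and the Liouville argument — is elementary; for the Fredholm/duality machinery one invokes the weighted elliptic theory of \cite{Diltsthesis, DiltsIsenbergMazzeoMeier} (or the classical treatments of $\Div L$ in the constraint-equations literature).
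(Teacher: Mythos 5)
The paper does not prove this proposition; it defers to \cite{Diltsthesis,DiltsIsenbergMazzeoMeier}, so there is no in-paper argument to compare against. Your blind proof is correct and is essentially the standard argument one finds in those references: record that $(\Div LW)_i = \Delta W_i + \tfrac{n-2}{n}\del_i(\Div W)$, check that the principal symbol is positive definite and that $\Div L$ is formally self-adjoint, invoke the weighted Fredholm/duality machinery, and then compute kernels by hand. Your divergence-and-Liouville argument for injectivity at every $\beta>0$ is exactly the right reduction and is clean. Two small points deserve mention. First, your surjectivity argument at $\beta=n-2$ relies on the duality identification $\mathrm{coker}\cong\ker$ at the conjugate weight, which requires the Fredholm property; but $\beta=n-2$ is precisely an exceptional (indicial) weight, where the Fredholm property fails. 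The clean fix is to observe that any isomorphism of Banach spaces is automatically Fredholm, so $\Div L$ cannot be an isomorphism at an exceptional weight (by contrast, at $\beta=0$ your direct kernel computation via translations needs no such patch). Second, you implicitly assume that $\Div L$ has the same exceptional set as the scalar Laplacian, i.e.\ no new indicial roots for $-\beta\in(-(n-2),0)$; this is not automatic since $\Div L$ is not diagonal, but your own divergence trick applied to homogeneous vector fields rules out any spurious indicial roots in that strip. With those details noted, your argument is complete and matches what the paper's references supply.
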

Finally, we give the theorem of existence and uniqueness of solutions to Lichnerowicz's equation on the Euclidean space, one of two main parts of the conformal equations: 
\begin{equation} \label{Licgeneral}
- {4 (n - 1) \over n - 2} \Delta u  + {n - 1 \over n} \rho^2 u^{N-1} 
= w^2 u^{- N - 1}.
\end{equation}

\begin{theorem}[Existence and uniqueness of solution to the Lichnerowicz equation] \label{theo. lichnerowicz}
If $\rho$ and $w$ are in  $C^{l - 2 , \alpha}_{- 1 -\beta/2}$ with $l \ge 2 , \, \alpha \in (0 , 1)$ and $ \beta\in (0 , n - 2)$, then the Lichnerowicz equation \eqref{Licgeneral} admits a unique positive solution $u$ satisfying $u - 1 \in C^{l , \alpha}_{- \beta}$.
\end{theorem}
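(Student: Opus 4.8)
The plan is to combine a priori estimates with a fixed point / sub- and super-solution argument. First I would establish existence via the method of sub- and super-solutions adapted to weighted Hölder spaces. The constant function $u_+ \equiv 1$ is a (weak) supersolution when the right-hand side behaves well, but because of the term $w^2 u^{-N-1}$ one must instead build an explicit supersolution of the form $u_+ = 1 + v_+$ where $v_+ > 0$ solves a linear problem $-\tfrac{4(n-1)}{n-2}\Delta v_+ = w^2$; by Proposition \ref{prop. Laplacian}(a) applied with $V = 0$ (and the decay $w^2 \in C^{l-2,\alpha}_{-2-\beta}$, which follows from the hypothesis $w \in C^{l-2,\alpha}_{-1-\beta/2}$), such a $v_+ \in C^{l,\alpha}_{-\beta}$ exists, is positive by the maximum principle, and makes $u_+$ a genuine supersolution since it drops the nonnegative term $\tfrac{n-1}{n}\tau^2 u^{N-1}$ and satisfies $w^2 u_+^{-N-1} \le w^2$. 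Dually, I would look for a subsolution of the form $u_- = \delta$ for small constant $\delta \in (0,1)$, or more robustly $u_- = \delta \, u_+$; since $\tau^2 u_-^{N-1}$ is small and $w^2 u_-^{-N-1}$ is large for small $\delta$, the subsolution inequality $-\tfrac{4(n-1)}{n-2}\Delta u_- + \tfrac{n-1}{n}\tau^2 u_-^{N-1} \le w^2 u_-^{-N-1}$ holds once $\delta$ is taken small enough, and one can further arrange $u_- \le u_+$.

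With ordered sub- and supersolutions $0 < u_- \le u_+$ in hand, I would run a monotone iteration: write the equation as $-\tfrac{4(n-1)}{n-2}\Delta u + \Lambda u = F(u)$ where $\Lambda$ is a large constant chosen so that $s \mapsto \Lambda s - \tfrac{n-1}{n}\tau^2 s^{N-1} + w^2 s^{-N-1}$ is monotone increasing on the interval $[\min u_-, \max u_+]$ (possible since this interval is bounded away from $0$ and $\infty$), and iterate starting from $u_+$. Each step inverts $-\tfrac{4(n-1)}{n-2}\Delta + \Lambda$; here I would use Proposition \ref{prop. Laplacian}(a) with the potential $V = \tfrac{n-2}{4(n-1)}\Lambda$ — but note $\Lambda$ is a nonzero constant, which is \emph{not} in $C^{l-2,\alpha}_{-2-\eps}$, so strictly speaking one should instead subtract off the constant background by writing $u = 1 + \psi$ and working with the operator acting on $\psi$, or invoke the standard Fredholm theory on weighted spaces directly for $\Delta - \text{const}$; in any case the inverse maps $C^{l-2,\alpha}_{-\beta-2}$ boundedly into $C^{l,\alpha}_{-\beta}$ for $0 < \beta < n-2$. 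The iterates are monotone decreasing, bounded below by $u_-$, hence converge pointwise to a limit $u$; elliptic regularity (Proposition \ref{prop. Laplacian}(b)) upgrades this to convergence in $C^{l,\alpha}_{-\beta}$ and shows $u - 1 \in C^{l,\alpha}_{-\beta}$ solves \eqref{Licgeneral}. Positivity is automatic since $u \ge u_- > 0$.

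For uniqueness I would use the standard convexity/monotonicity trick special to the Lichnerowicz equation: if $u_1, u_2$ are two positive solutions with $u_i - 1 \in C^{l,\alpha}_{-\beta}$, consider the ratio or the difference and test against $u_1 - u_2$. Concretely, subtracting the two equations and using that $s \mapsto \tfrac{n-1}{n}\tau^2 s^{N-1}$ is nondecreasing while $s \mapsto w^2 s^{-N-1}$ is nonincreasing, one gets
\[
-\tfrac{4(n-1)}{n-2}\Delta(u_1 - u_2) = \big(w^2 u_1^{-N-1} - w^2 u_2^{-N-1}\big) - \tfrac{n-1}{n}\tau^2\big(u_1^{N-1} - u_2^{N-1}\big),
\]
whose right-hand side has the opposite sign to $u_1 - u_2$ wherever $u_1 \ne u_2$; multiplying by $u_1 - u_2 \in C^{l,\alpha}_{-\beta}$ (which decays fast enough, together with its gradient, to justify integration by parts with no boundary term at infinity since $\beta > 0$) and integrating gives $\int |\nabla(u_1-u_2)|^2 \le 0$, forcing $u_1 \equiv u_2$. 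The main obstacle I anticipate is not any single step but the bookkeeping needed to make the sub/supersolution method rigorous in the weighted Hölder category — in particular verifying that the constant supersolution perturbation $v_+$ has exactly the claimed decay $C^{l,\alpha}_{-\beta}$ (this is where the precise hypothesis $w \in C^{l-2,\alpha}_{-1-\beta/2}$, i.e. $w^2 \in C^{l-2,\alpha}_{-2-\beta}$, is used and why the range $\beta \in (0, n-2)$ is exactly the isomorphism range in Proposition \ref{prop. Laplacian}), and handling the constant-coefficient zeroth order term $\Lambda$ which falls outside the stated hypotheses on $V$ and must be treated by hand.
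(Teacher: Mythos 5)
The paper does not actually prove this theorem; it is cited from Dilts's thesis and Dilts--Isenberg--Mazzeo--Meier, so there is no in-paper argument to compare against. Reviewing your proposal on its own merits: the overall strategy (sub/supersolution plus monotone iteration, using Proposition \ref{prop. Laplacian} to control the decay) is the right one and is essentially what the cited references do, and your supersolution $u_+ = 1 + v_+$ with $-\tfrac{4(n-1)}{n-2}\Delta v_+ = w^2$ is correct. However, there is a genuine gap in the subsolution.

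\textbf{The subsolution fails in general.} Your candidates $u_- = \delta$ and $u_- = \delta u_+$ both require, at every point, the inequality
\[
\tfrac{n-1}{n}\tau^2\,u_-^{N-1} \le w^2\,u_-^{-N-1},
\]
which (taking $u_- = \delta$) is $\tfrac{n-1}{n}\tau^2\,\delta^{2N}\le w^2$. This fails wherever $w$ vanishes but $\tau$ does not — and the hypotheses of the theorem allow this freely, including the case $w \equiv 0$, $\tau \not\equiv 0$, which is exactly the regime relevant to the rest of this paper (where $\sigma = 0$ and $w = |LW|$ can vanish at isolated radii or on a ball). In that case no positive constant, and no positive multiple of $u_+$, is a subsolution. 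The standard repair is to take $u_-$ to be the solution of the \emph{linear} problem
\[
-\tfrac{4(n-1)}{n-2}\Delta u_- + \tfrac{n-1}{n}\tau^2\,u_+^{N-2}\,u_- = w^2 u_+^{-N-1}, \qquad u_- - 1 \in C^{l,\alpha}_{-\beta},
\]
which exists and lies in $(0,u_+]$ by Proposition \ref{prop. Laplacian}(a) together with the maximum principle (the potential $\tfrac{n-2}{4(n-1)}\cdot\tfrac{n-1}{n}\tau^2 u_+^{N-2}$ is nonnegative and lies in $C^{l-2,\alpha}_{-2-\beta}$). Then $u_-\le u_+$ and the monotonicity of $s\mapsto s^{N-1}$ and $s\mapsto s^{-N-1}$ give the subsolution inequality. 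Moreover $\inf u_- > 0$ since $u_- > 0$ and $u_-\to 1$ at infinity, which is precisely the lower bound you need to make the monotone iteration well-defined and uniformly elliptic.

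\textbf{A secondary issue in uniqueness.} Your integration-by-parts argument produces a boundary term of size $O(R^{\,n-2-2\beta})$ on $\partial B_R$, which does not vanish as $R\to\infty$ when $\beta\le (n-2)/2$, yet the theorem allows all $\beta\in(0,n-2)$. The cleaner route — and the one that uses exactly the tools already stated in the paper — is to note that $v := u_1 - u_2 \in C^{l,\alpha}_{-\beta}$ satisfies $(\Delta - V)v = 0$ with
\[
V \;=\; \tfrac{n-2}{4(n-1)}\Big[\tfrac{n-1}{n}\tau^2\,\tfrac{u_1^{N-1}-u_2^{N-1}}{u_1-u_2}\;+\;w^2\,\tfrac{u_2^{-N-1}-u_1^{-N-1}}{u_1-u_2}\Big]\;\ge 0,
\]
and $V\in C^{l-2,\alpha}_{-2-\beta}$; then $v=0$ follows immediately from the injectivity in Proposition \ref{prop. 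Laplacian}(a). Alternatively, a strong maximum principle argument (interior max of $|v|$ is impossible when $V\ge 0$ and $v\to 0$ at infinity) works for all $\beta>0$ and avoids integration by parts altogether. The remaining structure of your argument — verifying $w^2\in C^{l-2,\alpha}_{-2-\beta}$, handling the constant zeroth-order term $\Lambda$ by writing $u=1+\psi$, and using Proposition \ref{prop. Laplacian}(b) to upgrade regularity — is sound.
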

\subsection{Spacetime Positive Energy Theorem}
Recall that the following definition and results are formulated for a general AF manifold. Here, however, we restrict our discussion to $\mathbb{R}^n$. Similarly, for simplicity and in line with our purposes, we consider the theorem only in the vacuum case.
\\
\\
Let $(\RR^n , g)$ be an AF manifold with 
\begin{equation} \label{mass well defined}
	g - \delta_{\text{Euc}} \in  C^{2, \alpha}_{ - {n - 2 \over 2} - \eps}
\end{equation}
for some $\eps >0$. The ADM mass of $(\RR^n , g)$ is defined by
$$
\madm (g) := {1 \over 2 (n - 2) \omega_{n - 1} } \lim_{r \to +\infty} \int_{|x| = r} \sum_{i,j = 1}^n (g_{ij,i} - g_{ii,j}) {x_j \over r} d \Mcal_0^{n - 1},
$$
where $\Mcal_0^{n - 1}$ is the $(n - 1)-$dimensional Euclidean Hausdorff measure and $\omega_{n-1}$ is the volume of the standard unit sphere in $\RR^n$. In particular, when $g = \varphi^{N - 2} \delta_{\text{Euc}}$ with $\varphi$ radial, the formula becomes
\begin{equation} \label{adm mass radial}
	\madm (g) = - {n - 1 \over 2(n - 2)} \lim_{r \to +\infty} (r^{n - 1} \varphi^\prime).
\end{equation}
Bartnik in \cite{Bartnik} showed that under the decay condition \eqref{mass well defined}, the mass is a geometric invariant. A long-standing conjecture in general relativity states that the ADM mass of a vacuum AF initial data set is positive unless it is a spacelike hypersurface of the Lorentz--Minkowski spacetime. This conjecture was proven to be true under suitable decay assumptions of $(g , k)$ at infinity and structural conditions on the manifold. The two most well-known results for this problem are the proof by Chruściel--Maerten \cite{ChruscielMaerten} for spin manifolds in arbitrary dimensions, and the proof by Eichmair \cite{Eichmair} for general manifolds in dimensions less than eight. We remark that the decay rate assumptions on $(g,k)$ at infinity in these results differ slightly. However, since our considered manifold $\mathbb{R}^n$ is spin for all $n \ge 3$, in the statement below we adopt the stronger of the two sets of decay rate assumptions.
\begin{pmt}[Chruściel--Maerten \cite{ChruscielMaerten} and Eichmair \cite{Eichmair}] \label{theorem SPMT}
Let $(\RR^n , g , k)$ be a vacuum AF initial data set. Assume that $(g - \delta_{\text{Euc}} , k) \in C^{2 , \alpha}_{-q+1} \times C^{1 , \alpha}_{-q}$ with $\alpha \in (0,1)$ and
\begin{equation} \label{decay-assumption-PET}
q >  \left\{\begin{array}{lr}
	{n \over 2} & \text{\quad if $3 \le n < 8$,}
	\\
	n - 2 & \text{\quad if $n \ge 8$~~~~~.}
\end{array}\right.
\end{equation}
Then the ADM  mass is non-negative. Moreover, if $\madm (g) = 0$, then $(\RR^n , g , k)$ is isometric to an AF spacelike hypersurface of $\LL^{n+1}$ with the second fundamental form $k$.
\end{pmt}
\section{Proof of Theorem \ref{main theorem}} \label{section proof of main theorem}
We are now ready to construct spacelike hypersurfaces of the Lorentz–Minkowski spacetime $\LL^{n+1}$ with a prescribed radial mean curvature. The construction will be carried out in three steps, corresponding to the subsections of this section. First, since the approach relies on the conformal method, we study how radial solutions of the conformal equations behave when the mean curvature 
$\rho$ is radial. Next, using this analysis, we give a detailed construction of an AF solution to the vacuum constraints that has the prescribed radial mean curvature $\rho$. Finally, we show that this solution has zero mass, and hence, by the rigidity part of the Spacetime PET, it is isometric to a spacelike hypersurface of $\LL^{n+1}$, which is our object. 
\subsection{Behavior of radial solutions to the conformal equation} \label{subsection behavior RCE}
Fixing a radial $\rho$, let us consider the conformal equations \eqref{CE} in the simple setting where $\sigma \equiv 0$. In this case, the system \eqref{CE} reduces to
\begin{subequations}\label{RCE}
	\begin{align}
		- \frac{4 (n - 1)}{n - 2} \Delta \varphi + \frac{n - 1}{n} \rho^2 \varphi^{N-1}
		&= |LW|^2 \varphi^{- N - 1}, \label{rLich}
		\\
		\Delta W_i + \frac{n - 2}{n} \partial_{i} \Big( \sum_{j =1}^n \partial_j W_j \Big) &= \frac{n-1}{n} \varphi^N \rho^\prime \frac{x_i}{r}. \label{rvector}
	\end{align}
\end{subequations}
Here and subsequently, $r$ is the usual Euclidean distance and we denote by $f^\prime$ the derivative of $f$ with respect to $r$. For our purpose, we restrict our attention to the set of all radial solutions to \eqref{RCE}. The following result plays a key role in our analysis.
\begin{proposition} \label{proposition construct solution}
Given $\alpha \in (0,1)$ and $\beta >0$, let $(\varphi, \rho)$ be radial functions such that $\varphi > 0$ and $(\varphi - 1 , \rho) \in C^{3 , \alpha}_{-  2 \beta + 2} \times C^{1, \alpha}_{ - \beta}$. Then $(\varphi , \rho)$ solves the conformal equations \eqref{RCE} if and only if  $\varphi^\p \ge 0$ and
\begin{equation} \label{mc identity}
	|\rho (r) | =  \left\{\begin{array}{lr}
		\sqrt{2 n N \varphi^{ - N + 1} \varphi^{\prime \prime}} & \text{ if $\varphi^\prime (r) = 0$,}
		\\
		{ (2n - 1) r^{-1} \varphi^{N/2} \varphi^\prime + N \varphi^{(N - 2)/2} (\varphi^\prime)^2 + \varphi^{N/2} \varphi^{\prime\prime}
			\over 
			\varphi^{N-1} \sqrt{(\varphi^\prime)^2 + (n-2) r^{-1} \varphi \varphi^\prime}} & \text{otherwise.}
	\end{array}\right.
\end{equation}   
\end{proposition}

\begin{proof} We will divide the proof into three steps.
\\
\\
\textbf{Step 1.} \textit{Solving the vector equations.} Letting
$$
f (r) := - \int_r^{+\infty} \varphi^N \rho^\prime \, d s,
$$
the vector equations \eqref{rvector} become
\begin{equation} \label{VE a}
	\Delta W_i + {n - 2 \over n} \del_{i} \Big( \sum_{j =1}^n \del_j W_j \Big) = { n - 1 \over n} \del_i f. 
\end{equation}
Differentiating \eqref{VE a} with respect to $i$ and summing all equations of the new system, we obtain
$
\Delta \Big( 2 \sum_{j = 1}^n \del_j W_j \Big) = \Delta f.
$
Therefore, by Proposition \ref{prop. Laplacian}(a), we have
$2\sum_{j = 1}^n \del_j W_j  =  f.$ Taking into account \eqref{VE a}, we get $\Delta W_i = {1 \over 2} \del_i f$. Since $f$ is radial, a direct computation shows that 
$$
W_i =  {x_i \over 2 r^n} \int_0^r s^{n-1} f \, d s.
$$
Thus, we have by definition
\begin{equation} \label{LW}
\aligned
(L W)_{ij} &=  - \Big( {\delta_{ij} \over n} - {x_i x_j \over r^2} \Big) \Big( f - {n \over r^n} \int_0^r s^{n-1} f \, ds \Big)
\\
& = - \Big( {\delta_{ij} \over n} - {x_i x_j \over r^2} \Big) \int_0^r s^n f^\prime \, ds
\\
& = - \Big( {\delta_{ij} \over n r^n} - {x_i x_j \over r^{n + 2}} \Big) \int_0^r s^n \varphi^N \rho^\prime \, ds
\endaligned
\end{equation} 
and so
\begin{equation} \label{module LW}
	|LW| = {{1 \over r^n} \sqrt{{n - 1 \over n}}} \Big| \int_0^r s^n \varphi^N \rho^\prime \, ds \Big|,
\end{equation}
which is a radial function.
\\
\\
\textbf{Step 2.} \textit{Solving the Lichnerowicz equation.}   It simplifies the argument, and causes no loss of generality, to assume $\varphi^\prime \ne 0$ almost everywhere. We first take \eqref{module LW} into the Lichnerowicz equation, it then follows that $(\varphi, \rho)$  solves the conformal equations \eqref{RCE} if and only if they satisfy 
\begin{equation} \label{one equation}
	- {4 n \over n- 2 } \Big(\varphi^{\prime \prime}  + {(n-1) \varphi^\prime \over r} \Big)+  \rho^2 \varphi^{N - 1} = {1 \over  r^{2 n}} \Big( \int_0^r s^n \varphi^N \rho^\prime \, ds \Big)^2 \varphi^{- N - 1}.
\end{equation}
Integrating by parts \eqref{one equation} we have
$$
- {4 n \over n- 2 } \Big(\varphi^{\prime \prime} + {(n-1) \varphi^\prime \over r} \Big) + \rho^2 \varphi^{N - 1}  =  {1 \over r^{2n}} \Big( r^n \varphi^N \rho - \int_0^r (s^n \varphi^N)^{\prime} \rho \, ds \Big)^2  \varphi^{-N - 1},
$$
equivalently,
\begin{equation} \label{derivative 0}
	2 r^n \varphi^N \rho  \int_0^r (s^n \varphi^N)^{\prime} \rho \, ds - \Big( \int_0^r (s^n \varphi^N)^{\prime} \rho \, ds \Big)^2 = 2N r^{2n} \varphi^{N + 1} \Big(\varphi^{\prime \prime} + {(n-1) \varphi^\prime \over r} \Big).
\end{equation} 
Next, multiplying \eqref{derivative 0} by $(r^n \varphi^N)^\prime / (r^n \varphi^N)^2$, we obtain
\begin{equation} \label{derivative condition}
	\Bigg( {1 \over r^n \varphi^N} \Big( \int_0^r (s^n \varphi^N)^{\prime} \rho \, ds \Big)^2 \Bigg)^\prime = 2N { (r^n \varphi^N)^\prime \over \varphi^{N - 1} } \Big(\varphi^{\prime \prime} + {(n-1) \varphi^\prime \over r} \Big).
\end{equation}
We observe here that the equations \eqref{derivative 0} and \eqref{derivative condition} are equivalent as long as $\varphi^\prime \ge 0$, which will be proven later, so we can continue our process without undue worry about equivalence among equations. Now, since
$
\lim\limits_{r\to 0} \Bigg( {1 \over r^n \varphi^N} \Big( \int_0^r (s^n \varphi^N)^{\prime} \rho \, ds \Big)^2 \Bigg) = 0,
$
the equation \eqref{derivative condition} is equivalent to
\begin{equation} \label{positive condition}
	\Big( \int_0^r (s^n \varphi^N)^{\prime} \rho \, ds \Big)^2 
	=
	2N \big( r^n \varphi^N \big) \Big(\int_0^r { (s^n \varphi^N)^\prime \over \varphi^{N - 1} } \Big(\varphi^{\prime \prime} + {(n-1) \varphi^\prime \over s} \Big) \, ds \Big).
\end{equation}
Therefore, assuming for the moment that 
\begin{equation} \label{condition 1}
	\int_0^r { (s^n \varphi^N)^\prime \over \varphi^{N - 1} } \Big(\varphi^{\prime \prime} + {(n-1) \varphi^\prime \over s} \Big) \, ds > 0 \quad \text{a.e in $\RR^n$,}
\end{equation}
we obtain
$$
\aligned
| \rho | &= \Bigg| { N \Bigg( \big( r^n \varphi^N \big) \Big( \int_0^r { (s^n \varphi^N)^\prime \over \varphi^{N - 1} } \Big(\varphi^{\prime \prime} + {(n-1) \varphi^\prime \over s} \Big) \, ds  \, ds\Big) \Bigg)^\prime
	\over 
	(r^n \varphi^N)^\prime 
	\sqrt{2N \big( r^n \varphi^N \big) \Big( \int_0^r { (s^n \varphi^N)^\prime \over \varphi^{N - 1} } \Big(\varphi^{\prime \prime} + {(n-1) \varphi^\prime \over s} \Big) \, ds\Big)} } \Bigg|
\\
&= {\Big| N (r \varphi) \big( r^{n - 1} \varphi^\prime \big)^\prime 
	+ N \int_0^r { (s^n \varphi^N)^\prime \over \varphi^{N - 1} } \Big(\varphi^{\prime \prime} + {(n-1) \varphi^\prime \over s} \Big) \, ds \Big|
	\over 
	\sqrt{\big( r^n \varphi^N \big) \Big( 2N \int_0^r { (s^n \varphi^N)^\prime \over \varphi^{N - 1} } \Big(\varphi^{\prime \prime} + {(n-1) \varphi^\prime \over s} \Big) \, ds \Big)} }.
\endaligned
$$
To simplify the formula, one can verify that
\begin{equation}\label{condition 2}
\bigg( {n \over 2} r^{n - 1} ( \varphi^2 )^\prime + {N \over 2} r^n (\varphi^\prime)^2 \bigg)^\p =  { (r^n \varphi^N)^\prime \over \varphi^{N - 1} } \Big(\varphi^{\prime \prime} + {(n-1) \varphi^\prime \over r} \Big),
\end{equation}
and so
$$
	\aligned
	|\rho| &=   {\big| 2 N (r \varphi) \big( r^{n - 1} \varphi^\prime \big)^\prime + n N r^{n - 1} ( \varphi^2 )^\prime + N^2 r^n (\varphi^\prime)^2 \big|
		\over 
		2 \sqrt{\big( r^n \varphi^N \big) \big( n N r^{n - 1} ( \varphi^2 )^\prime + N^2 r^n (\varphi^\prime)^2  \big)} }
	\\
	&= 	{\big| (2n - 1) r^{-1} \varphi^{N/2} \varphi^\prime + N \varphi^{(N - 2)/2} (\varphi^\prime)^2 + \varphi^{N/2} \varphi^{\prime\prime} \big|
		\over 
		\varphi^{N-1} \sqrt{(\varphi^\prime)^2 + (n-2) r^{-1} \varphi \varphi^\prime}}.
	\endaligned
$$
\textbf{Step 3.} \textit{$\varphi$ is increasing.} Recall that $\varphi^\p$ was assumed to be different from $0$ a.e in $\RR^n$ for simplicity. In view of \eqref{positive condition}--\eqref{condition 2}, we see that the necessary condition for $(\varphi , \rho)$ to be a solution to \eqref{RCE} is 
\begin{equation} \label{condition 3}
	\aligned
	n r^{n - 1} (\varphi^2)^\prime + N r^n (\varphi^\prime)^2  = r^{n-1} \varphi^\prime ( 2n \varphi + N r \varphi^\prime) > 0 \quad \text{a.e in $\RR^n$}.
	\endaligned
\end{equation}
We will show that this condition is equivalent to the fact that $\varphi$ is increasing. In fact, if $\varphi^\prime > 0$ a.e in $\RR^n$, \eqref{condition 3} is obvious. Conversely, assume that \eqref{condition 3} holds. Since $\varphi(0) > 0$, we have $ 2 n \varphi + N r \varphi^\prime > 0 $
near $0$. It then follows by \eqref{condition 3} that $\varphi^\prime \ge 0$ near $0$. Therefore, if $\varphi^\prime < 0$ somewhere, then there exists a convergent sequence $\{r_m\}$ such that $\varphi^\prime (r_m) < 0$ and $\varphi^\prime (r_m) \to 0$. This yields the contradiction $0 \le \varphi^\prime (r_m) \big( 2n \varphi (r_m) + N r_m \varphi^\prime (r_m) \big) < 0$. Therefore, we have $\varphi^\prime > 0$ a.e, and so the inequality \eqref{condition 3} holds as we assumed in \eqref{condition 1}. The proof is completed.
\end{proof}

\subsection{Existence of constraint solutions with prescribed radial mean curvature}
We now turn to the task of establishing the existence of vacuum constraint solutions whose mean curvature is given by a prescribed radial function $\rho$. Thanks to the conformal method and Proposition \ref{proposition construct solution}, we have the following result.
\begin{proposition}\label{proposition free mean curvature} Let $\rho$ be a radial function in $C^{1, \alpha}_{ - \beta} (\RR^n)$ with $\alpha \in (0 , 1)$ and $ \beta \in \big( 1 , {n \over 2} \big)$. Then the conformal equations \eqref{RCE} admit at least one radial positive function $\varphi$ satisfying $\varphi - 1 \in C^{3,\alpha}_{-2\beta +2}(\RR^{n})$. In particular, defining 
\begin{equation} \label{parameter new}
	\big( g_{ij}, k_{ij} \big) :=  \bigg( \varphi^{N - 2} \delta_{ij}, \, {\rho \over n} \varphi^{N-2} \delta_{ij} - \varphi^{-2} \Big( {\delta_{ij} \over n r^n} - {x_i x_j \over r^{n+2} }\Big) \int_0^r s^n \varphi^N \rho^\prime \, ds \bigg),
\end{equation} 
we obtain an AF solution $(g,k)$ to  the vacuum constraint \eqref{constraint} with $\tr_g k = \rho$.
\end{proposition}
\begin{proof}
We note that, by substituting the expression \eqref{LW} for $LW$ into \eqref{parametre}, it follows from the conformal method that the setting $(g,k)$ given in \eqref{parameter new} is automatically a solution to the constraint once $(\varphi, \rho)$ solves the conformal equations \eqref{RCE}. Therefore, it suffices to prove that \eqref{RCE} admit a radial positive solution $\varphi$ with $\varphi - 1 \in C^{3,\alpha}_{-2\beta +2}(\RR^{n})$. 
\\
\\
We first define the operator $T: [0 , 1] \times L^\infty \to L^\infty$ as follows. For any $\phi \in L^\infty$, by Proposition \ref{prop. vector Laplacian}, there exists a unique $W \in C^{2 , \alpha}_{ - \beta + 1}$ satisfying
\begin{equation} \label{vectoreq}
	\Div(L W) = {n-1 \over n} |\phi|^N d\rho,
\end{equation}
and hence, thanks to Theorem \ref{theo. lichnerowicz}, there exists a unique $\varphi>0$ such that  $\varphi - 1 \in C^{3, \alpha}_{- 2 \beta + 2}$ and
\begin{equation} \label{liceq}
	- {4 (n - 1) \over n - 2} \Delta \varphi  + {n - 1 \over n} t^{2N} \rho^2 \varphi^{N-1} 
	= |LW|^2 \varphi^{- N - 1}. 
\end{equation}
We define
\begin{equation} \label{def T}
T(t,\phi) := t \varphi.
\end{equation}
It is clear that a fixed point of $T(1, .)$ is a solution to the conformal equations \eqref{RCE}. In the spirit of the previous subsection, we will look for a fixed point of $T(1,.)$ in the subspace of radial functions
$$
RL^\infty := \{ f \in L^\infty ~|~ \text{$f$ is radial}\}.
$$
The following observations are the key to our arguments:
\\
\\
- Let $\Vcal : L^\infty \to C^{2 , \alpha}_{- \beta + 1}$ and $\Lcal : [0,1] \times C^{2 , \alpha}_{- \beta +1} \to C^{3 ,\alpha}_{-2 \beta + 2}$ be defined by
	$$
	\Vcal (\phi) := W, \qquad \Lcal (t , W)  : =\varphi - 1,
	$$
	where $W $ and $\varphi$ are determined by \eqref{vectoreq} and \eqref{liceq} respectively. Let $\Ical : C^{3 ,\alpha}_{-2 \beta + 2} \to L^\infty$ be the compact weighted H\"{o}lder embedding map given by Proposition \ref{prop. properties of WH}.  It is clear that 
	\begin{equation} \label{composition}
		T = t \big(\Ical \circ (1 + \Lcal) \circ \Vcal \big).
	\end{equation} 
	We have shown in the proof of Proposition \ref{proposition construct solution} that if $\phi$ is radial, then so is $|L \Vcal (\phi)|$. On the other hand, since the Laplace operator $\Delta$ is invariant under rotations, we deduce from the existence and uniqueness of solutions to the Lichnerowicz equation, guaranteed by Theorem \ref{theo. lichnerowicz}, that  if the source $(\rho , |LW|)$ is radial, then so is $\Lcal (t , W )$. Therefore, we can conclude by \eqref{composition} that $T(t,.)$ maps the subspace $RL^\infty$ into itself.
\\
\\
- If $T(t , \phi) = \phi$ with $(t,\phi) \in (0,1] \times RL^\infty$, then $\phi / t$ is a radial solution to the conformal equations \eqref{RCE} associated with the seed data $(\delta_{\text{Euc}} , t^N \rho)$. By Proposition \ref{proposition construct solution}, it follows that $\phi/t$ is increasing, and hence $\|\phi / t \|_{L^\infty} = 1$. In particular, the set $
K = \big\{ (t ,\phi) \in (0,1] \times RL^\infty ~\big|~  T(t , \phi) = \phi \big\}$ is bounded.
\\
\\
From these observations, once $T$ is proven to be  continuous and compact in $[0,1] \times RL^\infty$, the Leray--Schauder fixed point ensures that $T(1 , .)$ has a fixed point in $RL^\infty$ which is exactly what we desire. Moreover, in view of \eqref{composition}, since $\Vcal$ is continuous and  $\Ical$ is continuous and compact, it follows that $T$ will be continuous and compact once we establish the continuity of $\Lcal$. Therefore, it remains to prove that $\Lcal$ is a continuous operator. The argument we give here is essentially the same as in \cite{Maxwellcompact, NguyenFPT}, which give the corresponding  result for compact manifolds. 
\\
\\
In fact, we define $F ( t, W , \psi) : [0,1] \times C^{2 , \alpha}_{ - \beta + 1} \times C^{3 , \alpha}_{- 2 \beta + 2} \to C^{1 , \alpha}_{- 2\beta}$ by 
$$
F ( t , W , \psi) := - {4 (n - 1) \over n - 2} \Delta (\psi + 1)  + {n - 1 \over n} t^{2N} \rho^2 (\psi + 1)^{N-1} 
- |LW|^2 (\psi + 1)^{- N - 1}
$$
It is clear that $F$ is $C^1$ map and  $F(t , W , \Lcal(t,W)) = 0$ for all $(t , W) \in [0,1] \times C^{2 , \alpha}_{ - \beta + 1}$. A standard computation shows that the Fr\'echet derivative of $F$ with respect to $\psi$ is given by
$$
F_\psi (t , W ) (u) = - {4 (n - 1) \over n - 2} \Delta u  + {(n - 1) (N - 1) \over n} t^{2N} \rho^2 (\psi + 1)^{N-2} u
+ (N + 1) |LW|^2 (\psi + 1)^{- N - 2} u
$$
It follows that $F_\psi \in C\big( [0,1] \times C^{2 , \alpha}_{ - \beta + 1} , L (C^{3 , \alpha}_{- 2 \beta + 2}  , C^{1 , \alpha}_{- 2 \beta}) \big)$, where we denote $L (C^{3 , \alpha}_{- 2 \beta + 2}  , C^{1 , \alpha}_{- 2 \beta})$
the Banach space of all linear continuous maps from $ C^{3 , \alpha}_{- 2 \beta + 2} $ into $C^{1 , \alpha}_{- 2 \beta}$. In particular, setting $\psi_0 = \Lcal(t , W)$ we have
$$
F_{\psi_0} (t , W ) (u) = - {4 (n - 1) \over n - 2} \Delta u  + \Bigg( {(n - 1) (N - 1) \over n} t^{2N} \rho^2 (\psi_0 + 1)^{N-2} 
+ (N + 1) |LW|^2 (\psi_0 + 1)^{- N - 2} \Bigg) u
$$
Since 
$$
{(n - 1) (N - 1) \over n} t^{2N} \rho^2 (\psi_0 + 1)^{N-2} 
+ (N + 1) |LW|^2 (\psi_0 + 1)^{- N - 2} \ge 0,
$$
it follows by Proposition \ref{prop. Laplacian}(a) that
$F_{\psi_0} (t , W ) : C^{3 , \alpha}_{- 2 \beta + 2} \to C^{1 , \alpha}_{- 2 \beta}$ is an isomorphism. Therefore, the implicit function theorem implies that $\Lcal$ is a $C^1$-function in a neighborhood of $(t , W)$, which deduces the continuity of $\Lcal$. The proof is completed.
\end{proof}
\subsection{Vanishing ADM mass and spacelike hypersurfaces in $\LL^{n+1}$}
We are now at the final stage, namely to show that the vacuum constraint solutions in Proposition \ref{proposition free mean curvature} have zero ADM mass, and so, they correspond to AF spacelike hypersurfaces of $\LL^{n+1}$. The following result completes the proof of Theorem \ref{main theorem}.
\begin{proposition} \label{proposition null mass}
Let $(\RR^n, g,k)$ be an AF vacuum constraint solution given by \eqref{parameter new} in Proposition \ref{proposition free mean curvature}. Assume that $| \rho | \sim c r^{- q}$ at infinity for some constant $c>0$ and decay exponent $q \in \big({n+2 \over 4}, n\big)$. Then $(g - \delta_{\text{Euc}} , k) \in C^{3, \alpha}_{- 2 q + 2} \times C^{1, \alpha}_{ - q}$ and moreover
\begin{itemize}
	\item[(i)] if $q < {n \over 2}$, then $\madm (g) = - \infty$,
	\item[(ii)] if $q = {n \over 2}$ , then $- \infty < \madm (g) < 0$,
	\item[(iii)] if $q > {n \over 2}$, then $\madm (g) = 0$, and hence, $(\RR^n , g , k)$ is isometric to an AF spacelike hypersurface of $\LL^{n+1}$ as long as $(q,n)$ fulfills \eqref{decay-assumption-PET}. In particular, Theorem \ref{main theorem} holds.
\end{itemize}
\end{proposition} 
\begin{proof}
We recall that the radial functions $\rho$ and $\varphi$ in the definition \eqref{parameter new} of $(g,k)$ solves the conformal equations \eqref{RCE}. On one hand, by Proposition \ref{proposition construct solution}, it follows that $\varphi^\p \ge 0$ and $(\rho , \varphi)$ must satisfy Identity \eqref{mc identity}. Then, provided that $|\rho| \sim c r^{-q}$, this identity gives us
$$
	\lim_{r \to +\infty} \bigg( { \big| (2n - 1) r^{-1} \varphi^{N/2} \varphi^\prime + N \varphi^{(N - 2)/2} (\varphi^\prime)^2 + \varphi^{N/2} \varphi^{\prime\prime} \big|
		\over 
		r^{-q} \varphi^{N-1} \sqrt{(\varphi^\prime)^2 + (n-2) r^{-1} \varphi \varphi^\prime}} \bigg)= c.
$$
On the other hand, since $\varphi - 1 \in C^{3 , \alpha}_{-  2 \beta + 2}$, it is not difficult to check that
$$
\aligned
\lim_{r \to +\infty} \bigg( { \big| (2n - 1) r^{-1} \varphi^{N/2} \varphi^\prime + N \varphi^{(N - 2)/2} (\varphi^\prime)^2 + \varphi^{N/2} \varphi^{\prime\prime} \big|
	\over 
	r^{-q} \varphi^{N-1} \sqrt{(\varphi^\prime)^2 + (n-2) r^{-1} \varphi \varphi^\prime}} \bigg)
&= {1 \over \sqrt{n - 2}}\lim_{r \to +\infty} { \Big| (r^{2n - 1} \varphi^\prime )^\prime \Big| 
	\over 
	r^{n - 1 - q}\sqrt{ r^{2n - 1} \varphi^\prime}}
\\
&= {2 (n - q) \over \sqrt{n - 2}}\lim_{r \to +\infty} \Bigg| { \big( \sqrt{ r^{2n - 1} \varphi^\prime} \big)^\prime \over ( r^{n - q} )^\prime }\Bigg|.
\endaligned
$$
Combining these two facts, we get 
$$
\lim\limits_{r \to +\infty} \Big| { \big( \sqrt{ r^{2n - 1} \varphi^\prime} \big)^\prime \over ( r^{n - q} )^\prime }\Big|= {c \sqrt{ n - 2} \over 2(n - q)}.
$$
Applying L'H\^{o}pital's rule to the ratio above, it follows that
\begin{equation} \label{calculate mass}
\lim_{r \to +\infty} \Big( {\varphi^\prime \over r^{- 2q + 1}} \Big) = \Bigg(\lim_{r \to +\infty} {\sqrt{r^{2n - 1}\varphi^\prime } \over r^{n - q}}\Bigg)^2
= \Bigg( \lim_{r \to +\infty} \Bigg| { \big( \sqrt{r^{2n - 1}\varphi^\prime} \big)^\prime \over ( r^{n - q} )^\prime }\Bigg| \Bigg)^2
= {c^2 (n - 2) \over 4(n - q)^2}.
\end{equation}
In particular, this tells us that $
\varphi - 1 = - \int_r^{+\infty} \varphi^\prime \, ds \in C^0_{-2 q + 2}$,
and hence, thanks to the Lichnerowicz equation and Proposition \ref{prop. Laplacian}(b), we deduce that $\varphi \in C^{3,\alpha}_{-2q+2}$, which implies $(g - \delta_{\text{Euc}} , k) \in C^{3 , \alpha}_{- 2 q + 2} \times C^{1, \alpha}_{ - q}$ by definition.
\\
\\
Now, taking \eqref{calculate mass} into the formula \eqref{adm mass radial} yields
$$
\madm (g) = - {c^2 (n - 1) \over 8 (n - q)^2} \lim\limits_{r \to +\infty} r^{n - 2q  }.
$$
From this fact, together with the Spacetime PET (which is needed only for case (iii)), we conclude cases (i--iii). The proof is completed.
\end{proof}
We have established the existence of classical solutions to the equation \eqref{BI} when $\rho$ is radial by using the conformal method and the Spacetime PET. To conclude the article and provide further discussion of the approach, we make the following remarks.
\\
\\
Regarding the sign of mass, we emphasize that the appearance of negative mass in Proposition \ref{proposition null mass} does not contradict the Spacetime PET. This is because the decay rate of $k$ at infinity in the first two cases of the proposition is either critical or subcritical relative to the decay assumptions imposed on the second fundamental form in the theorem. Combined with the construction of Chruściel \cite{Chruscielletter}, which provides an example of vacuum constraint solutions with negative mass in the case where $g - \delta_{\text{Euc}}$ decays subcritically at infinity, these results demonstrate that the decay conditions on both $g$ and $k$ stated in the theorem are sharp.
\\
\\
It is also worth noting that Proposition \ref{proposition null mass} can be strengthened to state that, in all three cases, $(\RR^n,g,k)$ is isometric to a spacelike hypersurface of Lorentz–Minkowski spacetime, independently of the value of the mass. Consequently, Theorem \ref{main theorem} remains valid under the weaker assumption $q>2$ in all dimensions. In fact, since $\rho$ and $\varphi$ are radial, $(g,k)$ is, by definition, regular and spherically symmetric. Then, it follows by Birkhoff’s theorem that $(\RR^n,g,k)$ is automatically an AF initial data set in 
$\LL^{n+1}$, without any need to consider its mass. Nevertheless, as noted in the Introduction, we prefer the current statement, since it seems to apply to a broader class of $\rho$ beyond the radial case.
\\
\\
Finally, we remark that solutions to the electrostatic Born--Infeld \eqref{BI} in our result are classical and spacelike, representing an improvement in solvability compared to those in \cite{BonheureAveniaPomponio BIEquationRadialCharge, BonheureIacopetti BIEquationSmallCharge}. Conversely, the regularity assumption on $\rho$ in our result is a bit stronger than in existing works, where $\rho$ is only required to be Sobolev regular. More precisely, Sobolev regularity does not present any difficulty at the stage of applying the conformal method to construct initial data sets with prescribed mean curvature, but  H\"{o}lder  regularity for $\rho$ is needed in order to satisfy the assumptions in the spacetime PET version of Chruściel--Maerten \cite{ChruscielMaerten} and Eichmair \cite{Eichmair}. Therefore, if the regularity assumptions in the spacetime PET could be weakened, the assumptions in Theorem \ref{main theorem} would immediately become less restrictive as well. This appears to be a feasible direction, inspired by the work of Lee and LeFloch \cite{LeeLeFloch}, who showed that the Positive Mass Theorem continues to hold for metrics with only Sobolev regularity. If a similar result can be extended to the spacetime PET, then Theorem \ref{main theorem} could be further strengthened by assuming only Sobolev regularity on $\rho$.

\small 

\addcontentsline{toc}{section}{References}

\end{document}